\documentclass[a4paper,USenglish,cleveref,numberwithinsect,pdfa]{lipics-v2021}
\pdfoutput=1

\usepackage{microtype}
\usepackage[T5,T1]{fontenc}


\bibliographystyle{plainurl}

\title{Two or three things I know about tree transducers}


\author{Lê Thành Dũng (Tito) Nguy\~{ê}n}{École normale supérieure de Lyon, France \and \url{https://nguyentito.eu/}}{nltd@nguyentito.eu}{https://orcid.org/0000-0002-6900-5577}{}

\authorrunning{L.~T.~D.~Nguy\~{ê}n}

\Copyright{}

\ccsdesc[500]{Theory of computation~Transducers} 

\keywords{automata theory, survey paper}

\category{}



\hideLIPIcs
\nolinenumbers

\EventEditors{}
\EventNoEds{0} 
\EventLongTitle{}
\EventShortTitle{}
\EventAcronym{}
\EventYear{}
\EventDate{}
\EventLocation{}
\EventLogo{}
\SeriesVolume{}
\ArticleNo{}



\usepackage{mathtools}
\usepackage{csquotes}
\usepackage{mdframed}
\usepackage{tikz}
\usetikzlibrary{shapes,arrows,positioning,automata,shadows}
\usepackage{amsfonts,amsmath,amssymb}
\usepackage{amsthm}
\usepackage{hyperref}
\usepackage{stmaryrd}


\newcommand{\cC}{\mathcal{C}}
\newcommand{\SO}{\mathrm{SO}}
\newcommand{\MTT}{\mathrm{MTT}}
\newcommand{\MSOTS}{\mathrm{MSOTS}}

\newcommand{\unfold}{\mathtt{unfold}}
\newcommand{\yield}{\mathtt{yield}}
\newcommand{\lrangle}[1]{\langle #1 \rangle}

\newcommand{\qzero}{{\color{blue}q_0}}
\newcommand{\qone}{{\color{red}q_1}}

\theoremstyle{claimstyle}


\begin{document}
\maketitle

\begin{abstract}
  You might know that the name \enquote{tree transducers} refers to various kinds of automata that compute functions on ranked trees, i.e.\ terms over a first-order signature. But have you ever wondered about:
  \begin{itemize}
    \item How to remember what a macro tree transducer does?
    \item Or what are the connections between top-down tree(-to-string) transducers, multi bottom-up tree(-to-string) transducers, tree-walking transducers, (invisible) pebble tree transducers, monadic second-order transductions, unfoldings of rooted directed acyclic graphs (i.e.\ term graphs) --- and what happens when the functions that they compute are composed?
  \end{itemize}
  The answers may be found in old papers (mostly coauthored by Engelfriet), but
  maybe you can save some time by first looking at this short note.
\end{abstract}

\tableofcontents

\section{Introduction}

This relatively short note exists for two reasons.

First, I was frustrated at having to repeat myself in multiple papers involving
tree transducers in some way, such as~\cite{Sandra}, to justify claims that
follow more-or-less immediately from \enquote{connecting the dots} in the
literature. This is why I wanted to record a bunch of useful facts in
\Cref{sec:relationships}. Some parts of that section are also more pedagogical
in purpose.

As for \Cref{sec:top-bot}, it is meant to spread the word about a
\enquote{moral} rather than technical point, namely the \enquote{bottom-up}
perspective on deterministic macro tree transducers, to dispel the latter's
reputation for being overly complicated. From the reactions I observed at the
Dagstuhl Seminar on Regular Transformations in May 2023~\cite{Dagstuhl23202}, it
seems that many people working on string transducers nowadays find this
perspective enlightening, due to its close proximity to the well-known streaming
string transducer~\cite{SST,CopyfulSST}. In fact, macro tree transducers are
already presented this way in Courcelle and Engelfriet's
book~\cite[Section~8.7]{courcellebook}; but I hope that my alternative
exposition will prove helpful for some people. Another point raised is
\enquote{when is a tree-to-string transducer not quite the same thing as a tree
  transducer that outputs a string?}. Finally, \Cref{sec:strongly-sur} is a
niche observation about something I found perplexing at first.

\section{Top-down states vs bottom-up registers}
\label{sec:top-bot}

\subsection{Deterministic top-down tree transducers are bottom-up}

A \emph{top-down tree transducer} is like a tree automaton, except that the
result of a transition is not just one state per child of the current node, but
a tree expression involving states applied to children. It can also be seen as a
regular tree grammar controlled by an input tree --- just like the derivations of
a grammar, the semantics of a top-down tree transducer can be defined by
rewriting.
\begin{example}[{\enquote{conditional swap}, inspired by~\cite[\S2.3]{STT}}]\label{ex:cond-swap}
  Consider the function
  \[ f(a(t,u)) = a(f(u),f(t)) \qquad f(t)=t\ \text{when the root of $t$ is not}\ a \]
  on trees over the ranked alphabet $\{a:2,\; b:1,\; c:0\}$.

  To compute it, we use the initial state $q_{0}$, an auxiliary state $q_{1}$ and the transitions:
  \[ \qzero\langle a(t,u)\rangle \to a(\qzero\langle u \rangle, \qzero\langle t \rangle) \qquad \qone\langle a(t,u)\rangle \to a(\qone\langle t \rangle, \qone\langle u \rangle) \qquad \qzero\langle b(t) \rangle \to b(\qone\langle t \rangle) \qquad  \dots \]
  They are \emph{deterministic}: each pair of a state and a ranked letter appears only once on the left-hand side of a rule.
  Over the input $a(b(c),c)$, we have the following executions:
  \[ \qzero\langle a(b(c),c) \rangle \to a(\qzero\langle c \rangle, \qzero\langle b(c) \rangle) \to a( \qzero\langle c \rangle, b(\qone\langle c \rangle)) \to \dots \to a(c,b(c)) \]
\[ \qzero\langle b(a(b(c),c)) \rangle \to b(\qone\langle a(b(c),c) \rangle) \to \dots \to b(a(b(c),c)) \]
  By determinism, the rewriting relation $\to$ always reaches a unique normal form. In this case, $\qzero\lrangle{t} \to^{*} f(t)$ and $\qone\lrangle{t} \to^{*} t$ for any input tree $t$.
\end{example}

\subparagraph{The bottom-up view.}

Another informal way to see a top-down tree transducer is that the states are
mutually recursive procedures. A well-established principle in algorithmics is
that a top-down structurally recursive procedure can be optimized into a
bottom-up \enquote{dynamic programming} algorithm. We can also adopt this point
of view here; as we shall see in later subsections, it brings us closer in
spirit to more recent work on transducers (cf.~\S\ref{sec:mtt}).

\begin{example}
  The previous example of top-down tree transducer can be seen as a bottom-up
  device whose memory consists of two tree-valued \emph{registers}
  $X_{0},X_{1}$. After processing a subtree $t$, the contents of the register
  $X_{i}$ is $q_{i}\lrangle{t}$ (for $i\in\{0,1\}$). The \enquote{final
    output register} $X_{0}$ corresponds to the initial state $q_{0}$.
  \begin{center}
      \begin{tikzpicture}
        \node (top) at (0,1) {};
        \node (mid) at (0,0) {$b$};
        \node (bot) at (0,-1) {$a$};
        \node (b) at (-1,-1.5) {$b$};
        \node (c) at (1,-1.5) {$c$};
        \node (cc) at (-1,-2.5) {$c$};
        \draw[-] (top) -- node[right]{${\color{blue}X_{0}} = {\color{red}X_{1}} = b(a(b(c),c))$} (mid);
        \draw[-] (mid) -- node[right]{${\color{blue}X_{0}} = a(c,b(c)),\; {\color{red}X_{1}} = a(b(c),c)$} (bot);
        \draw (bot) -- (b);
        \draw (bot) -- (c);
        \draw (b) -- (cc);
      \end{tikzpicture}
  \end{center}
\end{example}

\subparagraph{Beware of nondeterminism.}

Let's consider the following nondeterministic rules:
\[ q\lrangle{b(t)} \to a(q\lrangle{t},q\lrangle{t}) \qquad q\lrangle{c} \to c \qquad q\lrangle{c} \to b(c) \]
By taking the associated rewriting system, we have:
\[ q\lrangle{b(c)} \to^{*} a(c,c)\ \text{or}\ a(b(c),c)\ \text{or}\ a(c,b(c))\ \text{or}\ a(b(c),b(c)) \]
However, if we work bottom-up, here is what happens. After reading $c$, the register contains either $c$ or $b(c)$; this is chosen nondeterministically. Then, at the next step, after reading $b(c)$, we get either $a(c,c)$ or $a(b(c),b(c))$ depending on the earlier choice --- but not $a(c,b(c))$.
Morally, using the terminology of programming languages: \enquote{top-down nondeterminism is naturally call-by-name, bottom-up nondeterminism is naturally call-by-value} (this is related to \enquote{IO vs OI} in automata theory).

From now on, in this article, \textbf{all automata models under consideration will be deterministic}.

\subsection{Top-down lookahead = bottom-up states}

Transducers are often extended with a feature called \emph{regular
  lookahead}, introdu. In general, this notion corresponds in automata theory to being
able to choose a transition based on a regular property of the
\enquote{future part} of the input. Concretely, in the case of top-down tree
transducers~\cite{Engelfriet77,Engelfriet77Erratum}, the transitions are now of the form
\[ q\lrangle{a(t_{1}|r_{1},\dots,t_{k}|r_{k})} \to \text{some term e.g.}\ b(a(q'\lrangle{t_{3}}, c))  \]
where the $r_{i}$ are some states of an auxiliary \emph{deterministic bottom-up tree automaton}, called the lookahead automaton. The idea is that this transition can only be applied when the run of the lookahead automaton on the subtree $t_{i}$ ends in state $r_{i}$ (for all $i$). The definition of determinism is adjusted accordingly.
\begin{example}
  Suppose we want to replace all subtrees of the input of the form $b(t)$ where
  $t$ does not contain any $b$ by $a(t,t)$. This can be done with a single state $q$ and
  regular lookahead. The lookahead states are $r_\oplus$ if the subtree contains a
  $b$, and $r_{\ominus}$ otherwise (this can indeed be implemented by a deterministic
  bottom-up tree automaton). The transitions include:
  \[ q\lrangle{b(t|r_{\oplus})} \to b(q\lrangle{t}) \qquad q\lrangle{b(t|r_{\ominus})} \to a(q\lrangle{t},q\lrangle{t}) \]
\end{example}
From the bottom-up point of view, this regular lookahead just corresponds to
\emph{adding states} to the transducer! That is, a configuration of a
corresponding bottom-up device now consists not only of register contents, but
also of a finite control state; but the computation is still performed entirely
bottom-up.
  \begin{center}
      \begin{tikzpicture}
        \node (top) at (0,1) {};
        \node (mid) at (0,0) {$b$};
        \node (bot) at (0,-1) {$c$};
        \draw[-] (top) -- node[right]{$\text{state} = r_{\oplus},\; X = a(c,c)$} (mid);
        \draw[-] (mid) -- node[right]{$\text{state} = r_{\ominus},\; X = c$} (bot);
      \end{tikzpicture}
  \end{center}

The observations in this subsection and in the previous one are essentially the
meaning of a paper by Fülöp, Kühnemann and Vogler~\cite{FulopKV04}:
\enquote{deterministic top-down tree transducers with regular lookahead =
  deterministic \emph{multi bottom-up tree transducers}}.

\subsection{Top-down tree-to-string transducers --- it's not what you think(?)}
\label{sec:string-output}

One of the major results in transducer theory in the 2010s has been the
decidability of equivalence for \emph{top-down tree-to-string
  transducers}~\cite{SeidlMK18}, which popularized the \enquote{Hilbert method}
(cf.~\cite{HilbertMethod}). But what are these devices?

\subparagraph{Strings as unary trees.}

We start by presenting a --- arguably \emph{the} --- natural way to restrict a
tree transducer model to strings as output. But here, when applied to top-down
tree transducers, it does \emph{not} give us the right thing. It relies on the
following encoding: strings over the finite alphabet $\Sigma$ are in canonical
bijection with trees over the ranked alphabet with:
\begin{itemize}
  \item a unary letter for each letter of $\Sigma$;
  \item a 0-ary letter $\varepsilon$ that serves as an \enquote{end-of-string}
        symbol.
\end{itemize}
For example, the string $abac\in\{a,b,c\}^{*}$ becomes the unary tree
$a(b(a(c(\varepsilon))))$.

The tree-to-string functions that can be computed this way by top-down tree
transducers are very weak, since they can only explore one branch of the input
tree. Note, however, that if the input also consists of strings encoded as unary
trees, then we get the very well-studied model of \emph{sequential transducers},
see for instance~\cite[\S{}V.1.2]{Sakarovitch}.

\subparagraph{The yield operation and concatenable strings.}

In fact, in the older literature, it often happens that \enquote{$X$
  tree-to-string transducer} means $\yield \circ X\ \text{tree transducer}$ ---
and it is the case for $X =$ \enquote{top-down}. The \emph{yield} of a tree is
the string obtained by reading its leaves from left to right, and erasing some
letters deemed \enquote{neutral}.

Since
$\yield(a(t_{1},\dots,t_{k})) = \yield(t_{1}) \cdot \ldots \cdot \yield(t_{k})$,
this amounts to working with a data type of \emph{concatenable strings} for the
output. That is, we can describe top-down tree-to-string transducers directly
(without $\yield$) using transitions whose right-hand side may use string
concatenation (as in context-free grammars).
\begin{example}
  The following single-state top-down tree-to-string transducer computes the
  postfix representation (\enquote{reverse Polish notation}) of its input tree:
  \[ q\lrangle{a(t,u)} \to q\lrangle{t} \cdot q\lrangle{u} \cdot a \qquad q\lrangle{b(t)} \to q\lrangle{t} \cdot b \qquad q\lrangle{c} \to c \]
\end{example}

From the bottom-up point of view, this transducer model should be equivalent to
some counterpart to Fülop et al.'s multi bottom-up tree
transducer~\cite{FulopKV04} whose registers contains concatenable strings.
Indeed, the \emph{multi bottom-up tree-to-string transducer} has been studied
--- under this precise name --- in Courcelle and Engelfriet's
book~\cite[\S8.6]{courcellebook}. As before, top-down regular lookahead
corresponds to bottom-up states.

\subparagraph{The string-to-string case.}

It does not make much sense for an automaton to process its input as
concatenable strings. Thus, to get a sensible string-to-string version of
top-down tree-to-string transducers, we should take unary trees as input, but
still use concatenation to produce output strings. The resulting machine model
is a variant of the \emph{copyful streaming string transducer}~\cite{CopyfulSST}
--- whose memory contains a finite state and concatenable registers --- that reads
its input from right to left, instead of left to right.

For other characterizations of this class of string-to-string functions,
see~\cite{FerteMarinSenizergues}.
\begin{remark}
  Arguably, since concatenation provides the canonical semigroup structure on
  strings, the \emph{recognition of a regular language by a semigroup morphism}
  processes its input as concatenable strings. An algebraic characterization of
  string-to-string MSO transductions in this vein has been obtained
  in~\cite{BojanczykN23}. Morally, the fact that both input and output are
  treated as concatenable strings makes it easy to show closure under
  composition.
\end{remark}

\subsection{Macro tree transducers store tree contexts}
\label{sec:mtt}

Traditionally, macro tree transducers~\cite{Macro} (MTTs) are presented as an extension of top-down tree transducers with \emph{parameters}, whose rules may look like, for example,
\[ \qzero\lrangle{a(t,u)} \to \qone\lrangle{t} (b(\qzero\lrangle{u})) \quad \qone\lrangle{a(t,u)}(x) \to \qone\lrangle{u}(\qone\lrangle{u}(x)) \quad \qone\lrangle{c}(x) \to a(x,x) \quad\dots\]
where $\qzero$ has no parameters and $\qone$ has the parameter $x$. A possible run of this MTT is
\[ \qzero\lrangle{a(a(b(c),c),c)} \to \qone\lrangle{a(b(c),c)}(b(\qzero\lrangle{c})) \to   \qone\lrangle{c}(\qone\lrangle{c}(b(\qzero\lrangle{c}))) \to \dots \]
In such a run, $\qone\lrangle{\text{something}}$ is always applied to an expression that eventually reduces to a tree. But one could also evaluate $\qone\lrangle{t}(x)$ for a formal parameter $x$, for instance:
\[ \qone\lrangle{a(c,c)}(x) \to \qone\lrangle{c}(\qone\lrangle{c}(x)) \to \qone\lrangle{c}(a(x,x)) \to a(a(x,x),a(x,x))\]
We may therefore say that the \enquote{value} of $\qone\lrangle{a(c,c)}$
is this \enquote{tree with parameters at some of the leaves}, usually called a
\emph{tree context}. It represents the map $x \mapsto a(a(x,x),a(x,x))$.

Thus, the \enquote{bottom-up view} is as follows: \emph{deterministic macro tree
  transducers can be seen as bottom-up devices whose memory consists of
  registers storing tree contexts}. This perspective in presented in
Courcelle--Engelfriet~\cite[Section~8.7]{courcellebook}; it makes the MTT is an
ancestor to later transducer models that are explicitly presented as
manipulating context-valued registers, such as the streaming tree transducer
of~\cite{STT} and the register tree transducer of~\cite[Section~4]{FOTree}.

\subparagraph{Lookahead elimination.}

As in the case of top-down tree(-to-string) transducers, regular lookahead
corresponds to bottom-up states in macro tree transducers. That said, the
significance of the feature differs between these two cases. For deterministic
top-down tree transducers, deciding whether a given transducer with lookahead
computes a function that could be computed without lookahead is still an open
problem, see~\cite{EngelfrietMS16} for partial results. For deterministic MTTs,
the answer is always \enquote{yes}:
\begin{theorem}[{\cite[Theorem~4.21]{Macro}}]
  For any deterministic MTT with regular lookahead, there is another
  deterministic MTT without lookahead that computes the same function.
\end{theorem}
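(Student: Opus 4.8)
The plan is to internalise the bottom-up lookahead computation inside the \emph{parameters} of a plain MTT, exploiting precisely the feature --- parameter passing --- that plain top-down tree transducers lack, and whose absence makes the analogous question for top-down transducers so much harder (there, lookahead genuinely adds power and deciding when it can be dispensed with is open). Write $B$ for the deterministic bottom-up lookahead automaton, $R$ for its finite state set, and $\delta_{a}$ for its transition at a letter $a$; recall that at a node $a(t_{1},\dots,t_{k})$ in state $q$, the transducer $M$ selects its rule according to the tuple $(B(t_{1}),\dots,B(t_{k}))$ of lookahead states of the children. The goal is to build a lookahead-free deterministic MTT $M'$ that recomputes these states on the fly. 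In the bottom-up picture of \Cref{sec:mtt}, this is exactly the statement that the finite control carrying $B$'s state can be absorbed into the context-valued registers.

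The crux is a \emph{selector gadget}: I claim a deterministic MTT can compute, using a family of states each equipped with one parameter $x_{r}$ per lookahead state $r\in R$, a map $\mathtt{sel}$ such that $\mathtt{sel}\lrangle{t}(\dots,x_{r},\dots)$ evaluates to the single parameter $x_{B(t)}$ tagged by the \emph{actual} lookahead state of $t$. On a leaf $c$ one returns $x_{B(c)}$; on an internal node the calls on the children are nested, threading the partial applications of $\delta$ through the parameters --- for instance in the binary case $\mathtt{sel}\lrangle{a(t,u)}(\vec{x}) = \mathtt{sel}\lrangle{t}\big(\dots,\ \mathtt{sel}\lrangle{u}(\dots, x_{\delta_{a}(r,s)}, \dots),\ \dots\big)$, where the outer argument indexed by $r$ is $\mathtt{sel}\lrangle{u}$ applied to the parameters $x_{\delta_{a}(r,-)}$. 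Unfolding this reads the whole subtree (as $B$ must) and returns $x_{B(a(t,u))}$; this is the step that uses parameters in an essential way, and hence the heart of why the elimination succeeds for MTTs but not for the parameter-free model.

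With the gadget in hand, $M'$ is assembled as a product of $M$ with $\mathtt{sel}$. Whenever $M$ would branch at a node on the lookahead state of a child, $M'$ instead applies $\mathtt{sel}$ to that child and plugs into its parameters $x_{r}$ the various right-hand sides that $M$ prescribes under the hypothesis ``$B(\text{child})=r$''; substitution then performs the selection, and, since the selected parameter is the correct right-hand side, $M'$ reproduces $M$'s behaviour. As each child is read several times --- once to determine its $B$-state, once inside each continuation --- the construction is inherently copyful, which is unproblematic for MTTs. The finitely many tuples of lookahead states yield finitely many continuations, so $M'$ has finitely many states and rules and stays deterministic.

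The main obstacle is bookkeeping rather than conceptual: one must interleave the selector's $R$-indexed parameters with $M$'s own parameters and output without disturbing either, arrange the nesting so that each child's lookahead state is resolved before the corresponding right-hand side of $M$ is consumed, and handle the root via one further application of $\mathtt{sel}$ in case the final output depends on $B$ of the whole input. It then remains to check that $\mathrm{dom}(M')=\mathrm{dom}(M)$ and that $M'$ computes exactly $f$, which follows by induction on the input tree once the invariant ``$\mathtt{sel}\lrangle{t}$ exposes $B(t)$'' is in place.
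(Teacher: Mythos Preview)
Your proposal is correct and matches the paper's own explanation essentially line for line: the paper's sketch is precisely that MTTs admit states of arbitrary arity, so one can encode the lookahead state $r_{i}\in R=\{r_{1},\dots,r_{n}\}$ by the $n$-ary context $(x_{1},\dots,x_{n})\mapsto x_{i}$ --- what the paper calls the \enquote{Church encoding of finite sets} --- and this is exactly your $\mathtt{sel}$ gadget. Your recursive construction of $\mathtt{sel}$ and the subsequent plugging of $M$'s candidate right-hand sides into its parameters simply spell out the mechanics that the paper leaves implicit.
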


To see why this is true, first, it is important to note that MTTs allow states
of arbitrary arities, i.e.\ with an arbitrary number of parameters. For instance
one may have a ternary state $q_{2}$ with a rule such as
$q_{2}\lrangle{b(t)}(x_{1},x_{2},x_{3}) \to a(q_{1}\lrangle{t}(x_{3}), x_{1})$.
We can then encode the bottom-up \enquote{lookahead states}, ranging over a
finite set $R = \{r_{1},\dots,r_{n}\}$, in the value of an $n$-ary state, i.e.\
a bottom-up register storing an $n$-ary context. The idea --- analogous to the
Church encoding of finite sets in $\lambda$-calculus --- is to represent the
lookahead state $r_{i}$ by the context $(x_{1},\dots,x_{n}) \mapsto x_{i}$.

\subsection{MTTs and string outputs --- it is and isn't what you think}
\label{sec:macro-string}

Following \Cref{sec:string-output}, there are two ways to use macro tree
transducers to output strings.

\subparagraph{Unary trees.}

Let us first look into the encoding of output strings as unary trees. The key
remark is that \emph{concatenable strings can be represented by contexts}, with
concatenation implemented by composition:
\[ ab \cdot ac = abac \quad\rightsquigarrow\quad (x\mapsto a(b(x))) \circ (x\mapsto a(c(x))) = (x\mapsto a(b(a(c(x)))))\]
Conversely, over a ranked alphabet of unary letters plus an end-of-string marker
$\varepsilon$, any context with $n$ parameters $x_{1},\dots,x_{n}$ must have the
form
\[ \qquad\qquad\qquad\underbrace{a_{1}(a_{2}(\dots (a_{n}}_{\mathclap{\text{can be represented by the string}\ a_{1}\dots a_{n}}}(\underbrace{\text{either}\ x_{i}\ \text{for some}\ i\in\{1,\dots,n\}\ \text{or}\ \varepsilon}_{\text{finite data}})) \dots )) \]
Therefore, we have:
\begin{theorem}[{\cite[Lemma~7.6]{MacroMSO}}]\label{thm:macro-top-down}
  Macro tree transducers with unary trees as output are equivalent in expressive
  power to top-down tree-to-string transducers with regular lookahead.
\end{theorem}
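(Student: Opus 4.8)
The plan is to route both sides through their \enquote{bottom-up} reformulations and then build a mutual simulation. By the correspondences recalled in \Cref{sec:string-output} and \Cref{sec:mtt}, a top-down tree-to-string transducer with regular lookahead is the same thing as a deterministic bottom-up device carrying a finite control state together with finitely many \emph{concatenable-string registers}, whereas a macro tree transducer is the same thing as a deterministic bottom-up device carrying a finite control state together with finitely many \emph{context-valued registers}. It therefore suffices to show that, when the output alphabet is unary (plus the marker $\varepsilon$), string registers and context registers have exactly the same power.

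For the inclusion of string-register devices into context-register ones, I would use the dictionary from the discussion above: a concatenable string $w = b_{1}\dots b_{k}$ becomes the unary context $x \mapsto b_{1}(\dots(b_{k}(x)))$, and string concatenation becomes context composition, as in $ab \cdot ac \rightsquigarrow (x\mapsto a(b(x)))\circ(x\mapsto a(c(x)))$. Each register update of the string device is a concatenation of output letters and registers, which translates verbatim into a composition of unary contexts, and the final string is emitted as the unary tree obtained by closing the output context with $\varepsilon$. This yields a macro tree transducer with regular lookahead over the unary output alphabet, and the lookahead is then removed by the elimination theorem recalled above.

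The substantial direction is the converse: simulating a macro tree transducer with unary output by a string-register device (hence, translating the bottom-up device back into the top-down presentation, by a top-down tree-to-string transducer with lookahead). Here I would exploit the structural fact stated just before the theorem: over a unary signature, \emph{every} context $C(x_{1},\dots,x_{n})$ is a single chain $a_{1}(\dots(a_{m}(z)))$ whose unique hole $z$ is either one parameter $x_{i}$ or the marker $\varepsilon$. I would accordingly decompose the value of each context register into a \emph{string part} $a_{1}\dots a_{m}$, stored in a string register, and a \emph{finite tag} $z\in\{x_{1},\dots,x_{n},\varepsilon\}$, folded into the finite control (there are boundedly many tags, since the states have bounded arity). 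It then remains to check that one step of the transducer is realized by string concatenations and tag updates. A right-hand side is a composition of output symbols and state-applications; but over a unary alphabet only the single downward path to the output's hole contributes, and one \enquote{walks} this path: at each state-application one appends that state's string part and then branches to the argument designated by that state's tag, stopping at a closed ($\varepsilon$) subcontext or at a parameter of the current state, which fixes the new tag. Since the right-hand side is a fixed finite expression and all branching is governed by the tags, there are only finitely many possible paths; each is a bounded concatenation of output letters and child registers, which the string device can perform (duplicating a register should a child state occur more than once along the path).

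The crux --- and the step I expect to be the main obstacle --- is exactly this last verification: one must argue that the copying and nested substitution permitted in macro-tree-transducer right-hand sides collapse, under the unary-output restriction, to a single concatenation along a tag-determined path, so that no context register ever needs to store more than a string together with a bounded tag. Once this is established the asymmetry in the statement becomes transparent: the finite control generated on the string side is precisely the regular lookahead that the top-down tree-to-string transducer requires, while on the macro-tree-transducer side no lookahead is needed, since --- as already observed for the lookahead-elimination theorem above --- states of arbitrary arity can absorb such finite information.
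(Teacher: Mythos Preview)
Your proposal is correct and follows essentially the same approach as the paper: the text surrounding the theorem already isolates the two key facts you use --- that concatenable strings embed as unary contexts with concatenation becoming composition, and that every context over a unary alphabet decomposes into a string part plus a finite tag in $\{x_{1},\dots,x_{n},\varepsilon\}$ --- and your bottom-up mutual simulation is precisely the natural way to turn these observations into a proof. The path-walking argument you give for the converse direction (following the tag-determined branch through the right-hand side, concatenating child string parts and output letters) is the expected elaboration and is sound.
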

According to the end of \Cref{sec:string-output}, this means that macro tree
transducers whose inputs and outputs both consist of unary trees are the same as
right-to-left copyful streaming string transducers (SSTs). Note that states can
be eliminated in copyful SSTs~\cite[Corollary~3.6]{CopyfulSST}, but for reasons
that crucially depend on the input being a string; these reasons are unrelated
to the elimination of lookaheads in MTTs sketched earlier.

\begin{remark}\label{sec:strongly-sur}
  Right-to-left \emph{copyless} SSTs~\cite{SST} would then correspond to
  \emph{strongly single-use} MTTs~\cite{MacroMSO} with regular lookahead whose
  inputs and outputs are unary trees, with a subtlety: in MTTs, the output is
  taken to be the final value of some fixed state/register, whereas in SSTs, the
  output depends on the final register values in a less restrictive way. This
  makes a difference as the function below can be computed by a right-to-left
  copyless SST (exercise for the reader), but not by a strongly single-use
  MTT~\cite[proof of Theorem~5.6]{MacroMSO}:
  \[ a^{n} \mapsto a^{n} \qquad a^{n}bw \mapsto a^{n}bb^{|w|}\ \text{for}\ w \in \{a,b\}^{*} \]
\end{remark}

\subparagraph{Yield.}

Post-composing MTTs with the $\yield$ operation results in the so-called
\enquote{macro tree-to-string transducers}, which manipulate \enquote{string
  contexts} such as $(x_{1},x_{2}) \mapsto x_2 ab x_{1} ac x_{2} b$. Morally,
these transducers are to macro grammars~\cite{Fischer68} what top-down
tree-to-string transducers are to context-free grammars; this is where the name
comes from. Several equivalent characterizations of macro tree-to-string
transducers may be found in~\cite{EngelfrietPushdownMacro}. We shall discuss
them further in \Cref{sec:mtt-comp-string}.

\section{Relationships between deterministic transduction classes}
\label{sec:relationships}

\subsection{Monadic second-order transductions with sharing / unfolding}
\label{sec:unfolding}

\emph{Monadic Second-Order logic} (MSO) is a logic over relational structures
(sets endowed with relations). For instance one can encode a string $w$ as a
relational structure by taking its set of positions $\{1,\dots,|w|\}$ endowed
with the total order $\leqslant$ and, for each letter $a$, a unary relation
$a(i) =$ \enquote{the $i$-th letter is $a$}. Over strings (resp.\ trees), the
properties that MSO can define correspond exactly to the regular languages
(resp.\ regular tree languages). Thus, MSO provides a canonical generalisation
of regular languages to structures beyond strings and trees; typically, graphs,
as discussed in Courcelle and Engelfriet's book~\cite{courcellebook}.

\emph{MSO transductions}, also covered in~\cite{courcellebook}, are the usual
way to define transformations of relational structures using MSO. (One may also
consider MSO interpretations~\cite{msoInterpretations} or MSO set
interpretations~\cite{ColcombetL07} but they are not always as well-behaved,
especially over structures other than strings and trees.) String-to-string MSO
transductions are also called \emph{regular functions} and have a rich theory,
see the survey~\cite{MuschollPuppis}. Tree-to-tree MSO transductions are also
well-studied, with several equivalent characterizations by machine
models~\cite{AttributedMSO,MacroMSO,STT,FOTree}. For instance a major theorem of
Engelfriet and Maneth~\cite{MacroMSOLinear} is that the tree-to-tree functions
definable by MSO transductions are exactly those computed by macro tree
transducers of \emph{linear growth}, i.e.\ $\text{output size} = O(\text{input
  size})$; see \Cref{thm:inaba} for a generalisation. Thus, most examples of
tree-to-tree functions seen in the previous section are MSO transductions.

However, one can easily define top-down tree transducers with non-linear growth:
\begin{example}
  The following transducer
  \[ \qzero\lrangle{S(t)} \to a(\qone\lrangle{t},\qzero{\lrangle{t}}) \qquad \qzero\lrangle{0} \to c \qquad \qone\lrangle{S(t)} \to b(\qone\lrangle{t}) \qquad \qone\lrangle{0} \to b(c) \]
  computes the function
  $S^{n}(0) \mapsto a(b^{n}(c),a(b^{n-1}(c), \dots a(b(c),c)\dots))$ with quadratic
  growth.
\end{example}

The thing to note about this example is that the output contains many repeated
subtrees. For instance, for $S(S(0)) \mapsto a(b(b(c)),a(b(c),c))$, the
subtree $b(c)$ is repeated twice; and this is because in the computation,
$\qone\lrangle{0}$ appears at two different places. One may compress the output
tree into a \enquote{shared representation} --- a \emph{rooted directed acyclic
  graph} (DAG) --- where all copies of each subtree produced by the same
$q\lrangle{t}$ are merged together:
\begin{center}
  \begin{tikzpicture}
    \node (a) at (0,0) {$S$};
    \node (b) at (0,-1) {$S$};
    \node (c) at (0,-2) {$0$};
    \draw (a) -- (b);
    \draw (b) -- (c);

    \node at (1,-1) {$\mapsto$};

    \node (a') at (3,0) {$a$};
    \node (a'') at (2,-0.4) {$b$};
    \node (b') at (4,-1) {$a$};
    \node (b'') at (3,-1.4) {$b$};
    \node (c') at (4,-2) {$c$};
    \draw[->] (a') -- (a'');
    \draw[->] (a') -- (b');
    \draw[->] (b') -- (b'');
    \draw[->] (a'') -- (b'');
    \draw[->] (b') -- (c');
    \draw[->] (b'') -- (c');

    \node at (5,-1) {$\mapsto$};

    \node (a') at (7,0) {$a$};
    \node (a'') at (6,-0.4) {$b$};
    \node (b') at (8,-1) {$a$};
    \node (b'') at (7,-1.4) {$b$};
    \node (bb) at (6,-1.4) {$b$};
    \node (c') at (8,-2) {$c$};
    \node (c'') at (7,-2) {$c$};
    \node (cc) at (6,-2) {$c$};
    \draw (a') -- (a'');
    \draw (a') -- (b');
    \draw (b') -- (b'');
    \draw (a'') -- (bb);
    \draw (bb) -- (cc);
    \draw (b') -- (c');
    \draw (b'') -- (c'');
  \end{tikzpicture}
\end{center}
(Here we have also merged the two $c$ leaves from $\qzero\lrangle{0}$ and $\qone\lrangle{0}$.)

In this factorisation, the first tree-to-DAG map is definable as an MSO
transduction. The second operation, that recovers an output tree from a DAG, is
called \emph{unfolding}. In general, all functions computed by top-down tree
transducers belong in the function class
\[ \unfold \circ (\text{MSO tree-to-DAG transductions}) \] which is called
\emph{MSO transductions with sharing} (MSOTS) in~\cite{LambdaTransducer} --- we use this
name here --- and \enquote{MSO term graph transductions}
in~\cite{AttributedMSO}.

\subsection{Attributed vs tree-walking transducers vs MSOTS}

Let us discuss characterisations of MSOTS by machines. There is a recent one
based on transducers using almost linear $\lambda$-terms
in~\cite{KanazawaMSO,LambdaTransducer}. A more classical one relies on
\emph{attributed tree transducers} (ATTs) --- which come from Knuth's attribute
grammars~\cite{Knuth68} --- and on \emph{MSO relabelings}, i.e.\ MSO tree
transductions that keep the structure of the tree intact, and merely change the
label of each node.

\begin{theorem}[{Bloem \& Engelfriet~\cite[Theorem~17]{AttributedMSO}}]
  $\text{ATT} \circ \text{MSO relabeling} = \text{MSOTS}$.
\end{theorem}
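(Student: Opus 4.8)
The statement is an equality of two function classes, so the plan is to prove the two inclusions separately. The conceptual bridge in both directions is the \emph{dependency graph} of an attributed tree transducer: given an input tree $t$, the attribute instances (pairs of a node of $t$ and an attribute name, of which there are only linearly many, since each node carries boundedly many attributes) together with the semantic rules form a directed graph, and for a well-defined (non-circular) ATT this graph is acyclic; the output is obtained by recursively evaluating the designated synthesised attribute at the root, which is exactly the unfolding of this DAG. So morally an ATT \emph{is} a tree-to-DAG map followed by $\unfold$, and the theorem reduces to checking that the tree-to-DAG map produced this way is precisely what an MSO relabeling followed by an MSO tree-to-DAG transduction can express.

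For the inclusion $\subseteq$, I would first show that a bare ATT already lies in MSOTS. Given an ATT, I construct an MSO tree-to-DAG transduction $\delta$ on a bounded copy-set of the input domain: each attribute instance becomes a node, and since the right-hand side of each semantic rule is a tree expression of bounded size over neighbouring attribute values, I split it into boundedly many further labelled copies whose out-edges point either to the copies realising the sub-expressions or to the attribute instances occurring at its leaves. The rule applicable at a node is determined by its local label configuration, which is MSO-definable, so the labels and edges of $\delta(t)$ are MSO-definable; the root of $\delta(t)$ is the copy carrying the designated root attribute. By construction $\unfold(\delta(t))$ is exactly the ATT output, so the ATT is in MSOTS. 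To absorb the preceding MSO relabeling $\rho$, I use that $\delta\circ\rho$ is again an MSO tree-to-DAG transduction: since $\rho$ leaves the domain and structure intact and merely rewrites labels by MSO formulas, one substitutes those formulas into the defining formulas of $\delta$ (the easy special case of closure of MSO transductions under composition). Hence $\text{att}\circ\rho=\unfold\circ(\delta\circ\rho)\in\text{MSOTS}$.

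For the inclusion $\supseteq$, take $g=\unfold\circ\delta$ with $\delta$ an MSO tree-to-DAG transduction, and push all of the MSO ``intelligence'' of $\delta$ into a relabeling. Using the standard normal form for MSO transductions, I factor $\delta=\delta_{0}\circ\rho$ where $\rho$ is an MSO relabeling and $\delta_{0}$ is a \emph{local} tree-to-DAG map: which copies exist, the label of each copy, and its out-edges all depend only on the relabeled label of the underlying node together with its immediate parent and children. It then remains to compute $\unfold\circ\delta_{0}$ by an ATT, which is where the attributes do their work: the local copy-structure becomes a fixed family of synthesised and inherited attributes, a DAG edge that stays within a node or goes to a parent/child becomes a direct attribute dependency, and the unfolding of $\delta_{0}(\rho(t))$ coincides with the ATT semantics precisely because evaluating attributes is recursive substitution.

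The main obstacle is realising \emph{long-distance} DAG edges with the strictly local dependency mechanism of an ATT. Even after relabeling, $\delta_{0}$ may connect a copy over a node $u$ to a copy over a node $v$ lying far away in $t$, whereas an attribute at a node may depend only on attributes at that node, its parent, and its children. The fix is to \emph{route} such an edge along the tree path from $u$ up to the least common ancestor of $u$ and $v$ and back down to $v$, using inherited attributes to carry the reference downward and synthesised attributes to carry it upward, with boundedly many such transport attributes whose forwarding decisions are precomputed by $\rho$. Correctness of this routing, and in particular non-circularity of the resulting ATT, so that it computes a genuine function, follows from the acyclicity of $\delta_{0}(\rho(t))$: any attribute-dependency cycle would project to a cycle in the DAG.
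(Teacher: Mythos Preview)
The paper does not prove this theorem at all: it is merely stated with a citation to Bloem and Engelfriet's original article, and the surrounding text immediately moves on to tree-walking transducers. There is therefore no proof here against which your proposal can be compared.

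That said, your sketch is broadly along the lines of the actual Bloem--Engelfriet argument (dependency DAG for $\subseteq$, routing of edges through attributes for $\supseteq$), but there is an internal inconsistency in your $\supseteq$ direction that you should fix. You first claim a factorisation $\delta=\delta_{0}\circ\rho$ where $\delta_{0}$ is \emph{local}, meaning that the out-edges of each copy depend only on the labels of the underlying node, its parent and its children; but two paragraphs later you say that ``even after relabeling, $\delta_{0}$ may connect a copy over a node $u$ to a copy over a node $v$ lying far away in $t$''. These two assertions contradict each other: if the out-edges are determined by the local neighbourhood, their \emph{targets} are in that neighbourhood, and there are no long-distance edges to route. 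What is actually true, and what the original proof uses, is that after a suitable MSO relabeling the edge relations of $\delta$ become MSO-definable \emph{functional} node-to-node relations whose ``direction of travel'' at each intermediate node is determined by the (new) label; it is this that lets you thread each long edge through a chain of attribute dependencies along the tree path. So the relabeling does not make $\delta_{0}$ local --- it makes the routing decisions local. Rewriting your argument with that distinction in mind would close the gap; the non-circularity justification you give at the end is fine.
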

(When we write this kind of equality, both sides of $=$ are understood to refer to function classes, and thus the equality denotes an equivalence in expressive power.)

As explained in~\cite[Section~3.2]{EngelfrietPebbleMacro}, the ATT and the
deterministic \emph{tree-walking transducer} (TWT) -- which appears under a
different name in~\cite{EngelfrietPebbleMacro}, cf.\ \Cref{rem:pebbles} -- are
\enquote{essentially notational variations of the same formalism}
(quoting~\cite[Section~8.7]{courcellebook}). Tree-walking transducers are like
top-down tree transducers, except that transitions do not always go
\enquote{downwards} in the input tree: $q\lrangle{\text{some node}\ v\ \text{in
    the input}}$ may be rewritten into an expression that contains
$q'\lrangle{\text{the parent node of}\ v}$. Arguably, TWTs are the natural
generalisation to trees of \emph{two-way transducers} on strings.

The correspondence between the ATT and the TWT relates the former's
\enquote{attributes} to the latter's \enquote{states}, in an analogous fashion
to the bottom-up registers / top-down states correspondence of
\Cref{sec:top-bot}. There are subtleties related to \enquote{noncircularity}
which make the ATT slightly weaker, but they become irrelevant in presence of
preprocessing by MSO relabeling.
\begin{remark}\label{rem:pebbles}
  In~\cite{EngelfrietPebbleMacro}, TWTs are called \enquote{0-pebble tree
    transducers}, as the case $k=0$ of the $k$-pebble tree transducers
  introduced in~\cite{Pebble} (the latter have seen a resurgence in recent years
  in the study of string-to-string polyregular functions~\cite{PolyregSurvey,Kiefer24}). The
  terminology \enquote{tree-walking} was only later used to speak of 0-pebble
  transducers, starting with~\cite{Engelfriet09}, but it was already in use for
  the tree-to-string version of these devices since the late 1960s~\cite{AhoU71}
  --- there, the output strings are produced as if they were unary trees. Let us
  also mention the \enquote{RT(Tree-walk) transducers}
  of~\cite{engelfriet2014contextfree} from the 1980s, discussed
  in~\cite[Section~3.3]{EngelfrietPebbleMacro}.
\end{remark}

When a TWT takes as input a tree that has been preprocessed by an MSO
relabeling, it has access to some extra information on each node; this finite
information recorded by the relabeling is necessarily a regular property of the
input tree with this node distinguished. Equivalently, one could replace the
preprocessing by adding to the TWT the ability to query \enquote{on the fly} the
relevant regular properties of the current node. This feature is usually called
\emph{regular lookaround} or \emph{MSO tests}, see e.g.\ the introduction
to~\cite{EngelfrietIM21}. Hence:
\[ \text{MSOTS} = \text{TWT} \circ \text{MSO relabeling} = \text{TWT with regular lookaround a.k.a.\ MSO tests} \]
However, in the absence of regular lookaround, TWTs are much weaker, since:
\begin{theorem}[{Bojańczyk \& Colcombet}]
  Nondeterministic tree-walking automata do not recognise all regular tree
  languages~\cite{bojanczyk2008tree}, and deterministic TWA are even
  weaker~\cite{BojanczykC06}.
\end{theorem}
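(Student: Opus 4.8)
The plan is to exhibit an explicit regular tree language $L$ and prove that no nondeterministic TWA recognises it (the deterministic case then follows a fortiori, although its first separation from the nondeterministic model was established earlier by a more direct adversary argument). First I would set up the compositional machinery. The behaviour of a TWA on a subtree $s$ hanging below a single edge can be summarised by a finite \emph{profile}: the relation recording that the automaton, entering the root of $s$ from above in state $p$, may eventually exit back above the root in state $q$, augmented with the set of entry states from which an accepting or looping run trapped inside $s$ is possible. Since the state set $Q$ is finite, the set $P$ of profiles is finite, and profiles \emph{compose}: the profile of a context with a subtree plugged into its hole is determined by the profiles of the two pieces. This reduces the study of runs to a finite transition-monoid-like structure, exactly as for two-way automata on strings.

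Next I would choose $L$ and a separating family of trees. The candidate must encode a genuinely global, non-branch-local property, so that a device making only bounded-memory decisions at each visited node cannot aggregate the information required to check membership. For each $n$ I would build a large tree $T_n$ together with a structurally perturbed variant $T_n'$ — obtained by swapping or re-plugging certain subtrees — arranged so that exactly one of $T_n, T_n'$ lies in $L$.

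The heart of the argument is a Ramsey-and-pumping step. Given a supposed TWA with state set $Q$, take $n \gg |Q|$ so that along some root-to-leaf path of $T_n$ more edges occur than there are distinct profiles in $P$; hence two edges carry the same profile, and the context $C$ between them can be removed (or duplicated) without changing the profile at its boundary. Because every accepting run factors through boundary profiles, this surgery cannot change whether the automaton accepts, yet by construction it flips membership in $L$ — a contradiction.

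The main obstacle is controlling this surgery in the presence of two-way motion. Unlike a one-pass bottom-up automaton, a TWA may cross the boundary of $C$ arbitrarily often, moving up and down, so "same profile" must capture the full two-way entry/exit relation together with looping behaviour, and I must verify that matching boundary profiles really render $T_n$ and $T_n'$ indistinguishable for \emph{every} run, not merely for a single downward sweep. Engineering a language $L$ robust enough that no finite boundary profile can detect the perturbation, while keeping $L$ regular, is the genuinely difficult point; this is precisely what Bojańczyk and Colcombet resolve through a carefully designed language and a delicate game-theoretic and algebraic analysis of TWA behaviours.
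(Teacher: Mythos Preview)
The paper does not prove this theorem; it merely states the result and cites the original Bojańczyk--Colcombet papers. So there is no ``paper's own proof'' to compare against here --- the theorem is quoted as background --- and your proposal is really an attempt to sketch the cited argument itself.

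That sketch has a genuine gap at the pumping step. You locate two edges on a branch carrying the same TWA profile and propose to excise (or duplicate) the context $C$ between them; this indeed preserves the TWA's verdict. But you then assert that ``by construction it flips membership in $L$'', and nothing you have set up delivers this. Regular tree languages enjoy the very same pumping property: if $L$ is recognised by a bottom-up automaton $B$, then contexts with equal $B$-type can be excised or iterated without changing membership in $L$. A pigeonhole along one branch gives you repeated TWA-profiles, but it gives no control whatsoever over the $B$-types at those positions, so there is no reason the surgery should move the tree across the boundary of $L$. Your paragraph about $T_n$ and the perturbed $T_n'$ is disconnected from the pumping paragraph: the perturbation you promise and the context $C$ found by pigeonhole are not the same object, and you never explain how they are made to coincide. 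Your closing sentence effectively concedes all of this by deferring the real work to ``what Bojańczyk and Colcombet resolve''.

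For orientation: the actual proof does use transition profiles (called \emph{types} there), but the separation is not a one-branch pumping lemma. A specific regular language is engineered, and the argument exploits a structural limitation of TWA on large balanced trees --- roughly, that the automaton can realise only boundedly many behaviours across the many subtrees, forcing confusions that the chosen $L$ can detect. The combinatorics is substantially more intricate than a single pigeonhole on a path.
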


Finally, one can impose a \enquote{single use restriction} on both attributed
and (deterministic) tree-walking transducers to avoid sharing phenomena, leading
to characterisations of tree-to-tree MSO transductions (MSOT):
\begin{align*}
  \text{MSOT} &= \text{single-use ATT} \circ \text{MSO relabeling~\cite[Theorem~17]{AttributedMSO}}\\
  &= \text{single-use TWT with MSO tests (regular lookaround)~\cite[Theorem~8.6]{courcellebook}}
\end{align*}

\subsection{Composition hierarchies: MSOTS (\& invisible pebbles / Caucal hierarchy) vs MTT (\& iterated pushdowns / safe higher-order)}

Engelfriet, Hoogeboom and Samwel have characterised the composition of
\emph{two} deterministic tree-walking transducers with MSO tests using a machine
model: the \emph{invisible pebble tree transducer}
(IPTT)~\cite{InvisiblePebbles}, a variant of the $k$-pebble tree transducer of
\Cref{rem:pebbles}. According to the previous subsection, we may equivalently
state their result as:
\begin{theorem}[{rephrasing of~\cite[Theorem~53]{InvisiblePebbles}}]
  $\MSOTS^{2} \mathrel{\overset{\mathrm{def.}}{=}} \MSOTS \circ \MSOTS = \mathrm{IPTT}$.
\end{theorem}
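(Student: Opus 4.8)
The plan is to prove the two inclusions of the equality $\MSOTS^{2} = \mathrm{IPTT}$ separately, treating the statement as a rephrasing of Engelfriet--Hoogeboom--Samwel's characterisation. For the inclusion $\mathrm{IPTT} \subseteq \MSOTS^{2}$, I would exploit the structure of an invisible pebble tree transducer as a mild generalisation of the tree-walking transducer: an IPTT walks the input tree while manipulating a stack of pebbles that are ``invisible'' in the sense that the automaton cannot test for the presence of a pebble below the top of the stack. The intuition is that each ``level'' of pebble use corresponds to one layer of tree-walking computation, so that the full power of a single pebble stack decomposes into two successive tree-walking (equivalently, $\MSOTS$) passes. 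First I would make precise the semantics of the IPTT in terms of its pebble-stack configurations, then show that the first $\MSOTS$ stage produces an intermediate annotated tree (or DAG) that records, via MSO-definable sharing, the relevant positional information needed by the second stage, and that the second $\MSOTS$ stage simulates the actual walking-and-output behaviour.

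For the converse $\MSOTS^{2} \subseteq \mathrm{IPTT}$, I would use the characterisation already recorded in the excerpt, namely $\MSOTS = \text{TWT with regular lookaround}$, so that an element of $\MSOTS^{2}$ is the composition of two such tree-walking transducers with MSO tests. The key step is to simulate this composition by a single IPTT: the second (outer) TWT walks on the output tree of the first (inner) TWT, but that intermediate tree is not materialised, so the IPTT must reconstruct, on demand, which node of the intermediate tree it is currently visiting and what its label and neighbours are. The invisible pebbles are exactly the device that allows the IPTT to record a position of the outer transducer while it relaunches a computation of the inner transducer to determine the local structure around that position in the intermediate tree. I expect this direction to require a careful bookkeeping argument showing that a stack of invisible pebbles of bounded height suffices, and that the MSO tests of both transducers can be folded into the finite control and lookaround capabilities of the IPTT.

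The main obstacle, and the technical heart of the argument, will be the converse direction, specifically the faithful simulation of ``walking on a virtual intermediate tree.'' The difficulty is that a single node of the intermediate tree is produced by the inner TWT from a computation involving a position in the original input together with possibly nested recursive invocations, so navigating from one intermediate node to its parent or child is not a local operation on the input --- it may require backtracking through the inner transducer's computation. Invisibility of the pebbles is precisely what is needed here: it lets the IPTT drop a pebble to mark ``where the outer transducer is'' and then freely re-explore the input with further pebbles to recompute local adjacency in the intermediate tree, without the re-exploration being confused by the marker it left behind. I would therefore spend most of the effort verifying that the pebble-stack discipline of the IPTT matches this two-level navigation pattern, and that no more than one layer of invisible pebbles beyond plain tree-walking is consumed, which is exactly what keeps the result at $\MSOTS^{2}$ rather than some higher composition power. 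The remaining steps --- handling the MSO tests and relabelings by absorbing them into finite control, and checking the base-case equivalences for strings-as-unary-trees --- I expect to be routine given the dictionary between ATTs, TWTs, and $\MSOTS$ assembled in the preceding subsections.
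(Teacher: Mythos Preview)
The paper does not prove this theorem at all: it is stated purely as a citation. The only work the paper does is observe that Engelfriet--Hoogeboom--Samwel's original formulation, which is in terms of compositions of tree-walking transducers with MSO tests, can be rewritten as $\MSOTS^{2}$ thanks to the identity $\MSOTS = \text{TWT with regular lookaround}$ established in the preceding subsection. Your proposal, by contrast, attempts to reconstruct the actual proof from~\cite{InvisiblePebbles}, so you are doing something the paper deliberately outsources.

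As for the sketch itself, the broad strategy for $\MSOTS^{2} \subseteq \mathrm{IPTT}$ --- simulate the outer TWT walking on the virtual output of the inner TWT, using pebbles to mark where you are while you recompute local neighbourhoods --- is the right intuition. But one technical point is off: you write that ``a stack of invisible pebbles of bounded height suffices''. It does not. The IPTT carries an \emph{unbounded} stack of invisible pebbles (only the top one being testable), and this unboundedness is essential: each step of the outer TWT on the intermediate tree may require re-running a computation of the inner TWT whose recursion depth is unbounded in the input size, and the pebble stack tracks that recursion. What makes the model sit at $\MSOTS^{2}$ rather than higher is not a bound on stack height but the \emph{invisibility} discipline (stack-like lifting/dropping, with only the top pebble visible), which prevents the pebbles from encoding more than one extra layer of tree-walking. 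Your phrase ``one layer of invisible pebbles'' later in the paragraph is closer to the right idea, but it is in tension with the earlier ``bounded height'' claim; the two should not be conflated.
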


The invisible pebble tree transducer manipulates an unbounded stack of
\enquote{pebbles}, which are pointers to nodes of the input tree. Similarly, the
memory of the \emph{pushdown tree transducer}~\cite{EngelfrietPushdownMacro}
consists mainly of a stack of pointers to input nodes; the difference is that
the reading head of an IPTT moves like that of a TWT, whereas the pointers in a
pushdown tree transducers can only move top-down. This suggests that
$\text{MTT} \subset \text{IPTT}$ which is indeed
true~\cite[Corollary~42]{InvisiblePebbles}.

In fact, we may also look at the composition of $k$ MSOTS beyond the case $k>2$,
as well as the composition of several MTTs. It turns out that the two
hierarchies are interleaved:
\begin{theorem}[{\cite[Corollary~25]{EngelfrietIM21}}]
  $\forall k\geqslant1,\; \MSOTS^{k} \subset \MTT^{k} \subset \MSOTS^{k+1}$.

  In particular, $\displaystyle \bigcup_{k\geqslant1} \MSOTS^{k} = \bigcup_{k\geqslant1} \MTT^{k}$.
\end{theorem}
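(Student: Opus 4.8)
The plan is to prove the two inclusions $\MSOTS^{k} \subseteq \MTT^{k}$ and $\MTT^{k} \subseteq \MSOTS^{k+1}$ for every $k \geqslant 1$, from which the interleaving and the equality of the two infinite unions follow immediately: any composition of $\MSOTS$ maps is sandwiched between compositions of $\MTT$ maps and vice versa, so the unions coincide. I would treat $k=1$ as the base case and reduce the general statement to it using closure properties of composition, so the real content lives in the two single-level comparisons $\MSOTS \subseteq \MTT$ and $\MTT \subseteq \MSOTS^{2}$.

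First I would establish $\MSOTS \subseteq \MTT$. By \Cref{sec:unfolding}, an $\MSOTS$ function factors as $\unfold \circ (\text{MSO tree-to-DAG transduction})$, and the preceding theorem gives $\MSOTS = \text{ATT} \circ \text{MSO relabeling}$. Since an MSO relabeling is a (linear) MSO transduction and MTTs with regular lookahead subsume MSO relabelings, it suffices to simulate an attributed tree transducer by a macro tree transducer. The key observation is that the \emph{inherited} attributes of an ATT, which carry information upward/contextually, are exactly the kind of data that the context-valued registers of an MTT (cf.\ \Cref{sec:mtt}) are designed to hold: one encodes each inherited attribute as a parameter of an MTT state, so that an $n$-tuple of attributes at a node becomes an $n$-ary context. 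This is the same Church-encoding idea already used for \enquote{lookahead elimination} above, and it turns the ATT's bidirectional attribute propagation into the MTT's bottom-up accumulation of tree contexts. Handling noncircularity is routine once the MSO relabeling preprocessing is available, as noted after the ATT/TWT correspondence.

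Next I would prove $\MTT \subseteq \MSOTS^{2}$. The idea is that a single MTT can be decomposed into two \enquote{simpler} stages, each of which is an MSOTS. One natural route is to first let an MSOTS produce a \emph{shared} intermediate structure — morally the DAG of tree contexts that the MTT builds up as register values, together with enough combinatorial labelling to record which context is plugged into which — and then let a second MSOTS perform the substitution/unfolding that assembles the final output tree. This mirrors the factorisation picture of \Cref{sec:unfolding}, but now one level of sharing is not enough precisely because MTTs can nest context substitutions, producing output of possibly exponential size; two MSOTS levels supply the two rounds of unfolding needed to resolve this nesting. The bookkeeping — ensuring that the intermediate DAG is MSO-definable from the input and that the second transduction is again an MSOTS rather than something stronger — is where I expect the main technical effort.

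The hardest step will be this $\MTT \subseteq \MSOTS^{2}$ inclusion, and specifically controlling the intermediate structure so that both halves genuinely stay within $\MSOTS$. The danger is that the obvious intermediate object (the full tree of context-valued register computations) is itself too large or too \enquote{unshared} to be produced by a single MSOTS, forcing more than two levels; conversely, over-compressing it can push the second stage beyond MSOTS. The trick is to exploit the strictness gap — $\MSOTS^{k} \subsetneq \MTT^{k} \subsetneq \MSOTS^{k+1}$ is claimed with proper inclusions — meaning one has genuine room: the $\MTT$ level is allowed to be strictly more expressive than a single $\MSOTS$ yet strictly less than two of them. I would lean on the growth-rate characterisation (MSO-definable $=$ linear-growth MTT, per Engelfriet--Maneth, cited before \Cref{thm:inaba}) to calibrate how much sharing each stage must introduce, and I would expect the detailed composition closure arguments — rather than any single clever construction — to constitute the bulk of the proof.
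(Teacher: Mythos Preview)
Your reduction to the base case is where the argument breaks. From $\MSOTS \subseteq \MTT$ you do get $\MSOTS^{k} \subseteq \MTT^{k}$, but from $\MTT \subseteq \MSOTS^{2}$ you only obtain $\MTT^{k} \subseteq (\MSOTS^{2})^{k} = \MSOTS^{2k}$, not $\MSOTS^{k+1}$. The phrase \enquote{closure properties of composition} hides the whole difficulty: to collapse $2k$ down to $k+1$ you need an \emph{absorption} phenomenon, not just the two base inclusions. Nothing in your proposal supplies one.

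The paper's proof is built precisely around such an absorption. It invokes two decomposition lemmas from~\cite{EngelfrietIM21}: (i) $\MTT = \MSOTS \circ (\text{top-down tree transducer})$ and (ii) $\MSOTS = (\text{top-down tree transducer}) \circ \MSOTS$. Point~(i) says the excess of $\MTT$ over $\MSOTS$ is merely a top-down tree transducer; point~(ii) says that excess is absorbed by an adjacent $\MSOTS$. Composing $k$ copies of~(i) and repeatedly applying~(ii) telescopes $\MTT^{k}$ down to $\MSOTS^{k}$ composed with one leftover top-down tree transducer, hence $\MTT^{k}\subseteq\MSOTS^{k+1}$. Your direct attack on $\MTT \subseteq \MSOTS^{2}$ --- building an intermediate DAG of context values and unfolding twice --- is in the spirit of~(i), but without isolating the top-down-tree-transducer factor you cannot reuse the base case inductively. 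Your $\MSOTS \subseteq \MTT$ argument via $\text{ATT}\circ\text{MSO relabeling}$ and the inherited-attributes-as-parameters encoding is fine, though again more work than needed once~(i) is available (take the top-down factor to be the identity).
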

\begin{proof}[Proof idea]
  This is beause:
  \begin{enumerate}
    \item $\text{MTT} = \text{MSOTS} \circ \text{top-down tree transducer}$~\cite[Lemma~24]{EngelfrietIM21};
    \item $\text{MSOTS} = \text{top-down tree transducer} \circ \text{MSOTS}$~\cite[Theorem~18]{EngelfrietIM21}.
  \end{enumerate}
  (beware: the order of composition is reversed compared to the usual in~\cite{EngelfrietIM21}).
\end{proof}

Let us mention that the main theorem of the above-cited paper~\cite{EngelfrietIM21} is:
\begin{theorem}\label{thm:inaba}
  The functions (cf.~\S\ref{sec:unfolding}) in the $\MSOTS^{k}$
  hierarchy (equivalently, in the $\MTT^{k}$ hierarchy) of \emph{linear
    growth} are exactly the tree-to-tree MSO transductions.
\end{theorem}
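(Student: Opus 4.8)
The plan is to establish the two inclusions separately. The containment of tree-to-tree MSO transductions in the linear-growth fragment of the hierarchy is immediate: a tree is a DAG without any sharing, so $\unfold$ acts as the identity on it; hence any MSO transduction producing trees is also an MSO tree-to-DAG transduction followed by $\unfold$, giving $\text{MSOT} \subseteq \MSOTS \subseteq \MSOTS^{k}$ for every $k \geqslant 1$. Moreover such transductions have linear growth by construction, since their output universe is carved out of a fixed finite number of copies of the input universe. All the substance therefore lies in the converse: a function in $\bigcup_{k} \MSOTS^{k}$ of linear growth must be an MSO transduction.

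For the converse I would induct on $k$, moving freely between the two hierarchies via the interleaving $\MSOTS^{k} \subseteq \MTT^{k} \subseteq \MSOTS^{k+1}$ recorded above, which makes their unions --- and therefore their linear-growth fragments --- coincide. The base case $k=1$ reduces to a known theorem: since $\MSOTS \subseteq \MTT$ (instantiate the top-down transducer in the decomposition $\MTT = \MSOTS \circ \text{top-down}$ by the identity), a linear-growth function in $\MSOTS$ is a linear-growth macro tree transducer, which is an MSO transduction by the Engelfriet--Maneth theorem~\cite{MacroMSOLinear} recalled in \S\ref{sec:unfolding}.

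The inductive step is the heart of the matter and the main obstacle. Writing a linear-growth $f \in \MTT^{k+1}$ as $f = g \circ h$ with $h \in \MTT$ and $g \in \MTT^{k}$, one cannot apply the induction hypothesis to $g$ and $h$ directly: the composite may grow linearly even when $h$ blows its input up super-linearly and $g$ then contracts it again, so neither factor need have linear growth in isolation. The key is to convert the global linear-growth hypothesis into a local, MSO-checkable finiteness condition. The guiding principle is that linear growth amounts to bounded sharing --- in the DAG picture underlying $\MSOTS$, a linear overall size increase forces each shared subresult to be unfolded only \enquote{boundedly often} --- and bounded unfolding is exactly the sort of finite, regular information that an MSO relabeling can precompute and attach to each input node. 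Concretely, I would use the finite-copying normal form for macro tree transducers of linear size increase that underpins~\cite{MacroMSOLinear}, and propagate the copying bound across the composition so as to isolate the bounded \enquote{useful core} of the intermediate tree.

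Once that bounded core is identified, it is both produced and consumed by honest MSO transductions, and the fundamental closure of tree-to-tree MSO transductions under composition (so that the two-step composite is again a single MSO transduction; see~\cite{courcellebook}) lets the induction close. I expect essentially all the difficulty to concentrate in propagating the sharing/copying bound across the composition boundary --- turning \enquote{$g \circ h$ has linear growth} into usable finiteness statements about $g$ and $h$ restricted to the part of the intermediate structure that actually survives to the output --- rather than in any of the surrounding bookkeeping.
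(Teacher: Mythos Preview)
The paper does not prove this theorem; it simply records it as the main result of~\cite{EngelfrietIM21}. There is therefore no in-paper argument to compare against, and I can only assess your proposal on its own terms.

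Your proposal is a plan, not a proof. You handle the easy inclusion cleanly and you correctly locate the obstacle in the hard direction: writing a linear-growth $f \in \MTT^{k+1}$ as $g \circ h$ with $h \in \MTT$ and $g \in \MTT^{k}$, neither factor need individually have linear growth, so the induction hypothesis does not apply to either one. Your proposed remedy --- invoke a finite-copying normal form to isolate a bounded \enquote{useful core} of the intermediate tree, then express its production and consumption as MSO transductions --- is in the right spirit, but you do not carry it out. You say yourself that you \enquote{expect essentially all the difficulty to concentrate} at exactly this point. That assessment is accurate, and it is precisely this difficulty that~\cite{EngelfrietIM21} is devoted to resolving; turning a global linear-growth hypothesis on a composition chain into local finite-copying constraints on the components is the substance of that paper, not routine bookkeeping, and nothing in your sketch explains how the propagation is actually achieved (why the \enquote{useful core} is MSO-definable from the input, why the restricted $h$ becomes finite-copying, how the bound survives through $k$ layers rather than one). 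So the proposal identifies the right crux but stops where the real work begins.
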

Let us also note that:
\[ \text{MSOTS}^{k} = \text{TWT}^{k} \circ \text{MSO relabeling}\]
Since we already know this for $k=1$, the result amounts to eliminating intermediate relabelings in the composition. This can indeed be done, thanks to:
\begin{lemma}
  MSO relabeling $\circ$ MSOTS $=$ MSOTS.
\end{lemma}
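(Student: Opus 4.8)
The plan is to unfold the definition $\MSOTS = \unfold \circ (\text{MSO tree-to-DAG transductions})$ and to push the final relabeling through the $\unfold$ operation, so that it is performed on the DAG rather than on the output tree. One inclusion is immediate: since the identity is an MSO relabeling, we have $\MSOTS \subseteq \text{MSO relabeling} \circ \MSOTS$. For the converse, I fix an MSO relabeling $r$ and an MSOTS $f = \unfold \circ g$, where $g$ is an MSO tree-to-DAG transduction, and I aim to build an MSO tree-to-DAG transduction $g'$ with $\unfold \circ g' = r \circ \unfold \circ g$, i.e.\ with $\unfold(g'(x))$ equal to the result of applying $r$ to $\unfold(g(x))$.

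The new label that $r$ attaches to a node $v$ of the output tree $\unfold(g(x))$ is an MSO-definable, hence regular, property of that node. By the standard composition (Feferman--Vaught) method for MSO over finite trees, such a property is determined by a pair consisting of (i) the \emph{bottom-up type} of the subtree rooted at $v$, and (ii) the \emph{top-down type} of the context obtained by deleting that subtree. The difficulty is that a single node $n$ of the DAG $g(x)$ may unfold to many copies in the tree: all these copies share the same bottom-up type, since it depends only on the sub-DAG rooted at $n$ and is therefore computed by a deterministic bottom-up automaton run directly on the DAG (which is MSO-definable there); but the copies generally have \emph{different} top-down types, since they sit in different contexts. So type (i) can be read off the DAG node, whereas type (ii) cannot.

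To recover the top-down types I take a product of the DAG with a deterministic top-down tree automaton $\mathcal{A}$ computing type (ii), run over the alphabet enriched by the bottom-up types. Concretely, $g'$ first computes $g(x)$, then replaces each DAG node $n$ by the pairs $(n,s)$ for $s$ a state of $\mathcal{A}$, with an edge $(n,s)\to(n',s')$ whenever $n\to n'$ is the $i$-th edge out of $n$ and $s'$ is the state that $\mathcal{A}$ assigns to the $i$-th child from state $s$ (using the bottom-up types of $n$'s children, which are available in the DAG); the root becomes $(\text{root},s_{0})$. Projecting away the second component shows that this product is acyclic and that its unfolding from $(\text{root},s_{0})$ is isomorphic, as an unlabelled tree, to $\unfold(g(x))$: paths from the root correspond bijectively, the state component being forced by determinism of $\mathcal{A}$. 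Each copy of a tree node now carries its top-down type $s$ inside its DAG node, so a final relabeling can assign to $(n,s)$ exactly the label $r$ would assign, combining the bottom-up type of $n$ with $s$.

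Since the product is a fixed-copy MSO transduction, the final relabeling is an MSO relabeling of the DAG, and MSO transductions are closed under composition, the whole map $g'$ is an MSO tree-to-DAG transduction; by construction $\unfold \circ g' = r \circ \unfold \circ g$, which therefore lies in $\MSOTS$. The main obstacle, and the only place where genuine work is needed, is the product step: one must verify that splitting each DAG node according to the top-down automaton leaves the unfolded tree unchanged, which amounts to handling nodes with several parents correctly --- each incoming path induces its own top-down state and hence selects the appropriate copy --- so that copies are neither lost nor spuriously merged.
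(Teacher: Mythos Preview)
Your argument is correct, and it takes a genuinely different route from the paper. The paper does not work with the decomposition $\MSOTS = \unfold \circ (\text{MSO tree-to-DAG transductions})$ at all; instead it uses the machine characterisation $\MSOTS = \text{TWT with MSO tests}$ established just above, and then invokes~\cite[Theorem~18]{EngelfrietIM21}, which says that tree-walking transducers with MSO tests are closed under post-composition by top-down tree transducers with MSO tests. Since MSO relabelings are (trivially) such top-down transducers, the lemma follows in two lines of citation-chasing.

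Your approach, by contrast, pushes the relabeling through $\unfold$ by hand: you split the MSO property at each output node into a bottom-up subtree type and a top-down context type, observe that the former is constant across all unfoldings of a DAG node while the latter is not, and recover the latter via a product with a deterministic top-down automaton (whose transitions use the bottom-up types of the siblings). This is more work, but it is self-contained --- it needs only closure of MSO transductions under composition and the standard Feferman--Vaught decomposition, not the TWT characterisation nor the heavier closure theorem from~\cite{EngelfrietIM21}. It also makes explicit \emph{why} the lemma holds at the level of DAGs and unfolding, namely that the only obstruction (context-dependence of the relabeling) can be absorbed by finitely duplicating DAG nodes. The paper's proof is shorter and situates the lemma within a broader closure phenomenon; yours is more elementary and gives a concrete construction of~$g'$.
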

\begin{proof}
  According to~\cite[Theorem~18]{EngelfrietIM21}, tree-walking transducers with MSO tests are closed under \emph{post}composition by top-down tree transducers with MSO tests. The latter can compute at least all MSO relabelings, see~\cite[p.~46]{AttributedMSO}.
\end{proof}

Engelfriet and Vogler have shown that the class $\text{MTT}^{k}$ (for some fixed
$k\geqslant1$) corresponds to the expressive power of the \enquote{iterated
  pushdown tree transducer}~\cite{EngelfrietPushdownMacro,EngelfrietHighLevel},
which generalizes the aforementioned pushdown tree transducer by working with a
stack of … of stacks of input pointers, with nesting depth $k$. (See also the
work of Sénizergues on string-to-string
functions~\cite{FerteMarinSenizergues,lvl3}.) Also, it seems natural to believe
that:
\begin{claim}\label{clm:iptt-gen}
  $\text{MSOTS}^{k+1} =$ a variant of iterated pushdown transducers whose
  input pointers can move in a tree-walking rather than top-down fashion, with
  nesting depth $k$, such that the case $k=1$ corresponds to the IPTT.
\end{claim}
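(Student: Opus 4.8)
The plan is to prove the two inclusions of the claimed equality $\MSOTS^{k+1} = \mathrm{TWIPT}_{k}$, where I write $\mathrm{TWIPT}_{k}$ for the conjectured model --- a level-$k$ iterated pushdown transducer whose nested stacks hold \emph{invisible} pointers into the input tree and whose pointers move by tree-walking steps (to a parent or to an $i$-th child) under MSO tests, rather than only descending. I would first normalise the definition so that the degenerate level $k=0$, with no nested stack and only the tree-walking finite control, is exactly a TWT with MSO tests, giving $\mathrm{TWIPT}_{0} = \MSOTS$, and so that level $k=1$ is the IPTT of~\cite{InvisiblePebbles}. Insisting on \emph{invisible} pointers is essential: visible pebbles would climb the genuinely larger pebble hierarchy rather than the Caucal/MSOTS one.

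Next I would fix the algebraic target. Telescoping the two composition identities $\MTT = \MSOTS \circ \text{TDTT}$ and $\MSOTS = \text{TDTT}\circ\MSOTS$ from the proof idea above yields $\MTT^{k} = \MSOTS^{k}\circ\text{TDTT}$, so Engelfriet--Vogler's equality between $\MTT^{k}$ and the top-down iterated pushdown of level $k$~\cite{EngelfrietPushdownMacro,EngelfrietHighLevel} can be rewritten as $\MSOTS^{k}\circ\text{TDTT}$. The goal $\MSOTS^{k+1} = \MSOTS^{k}\circ\MSOTS$ differs from this only by replacing the innermost top-down reading of the input with a tree-walking one. This localises the whole distinction between the two hierarchies at the bottom of the stack and suggests that each additional invisible stack level contributes exactly one $\MSOTS$ factor, regardless of how the head moves.

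The heart of the proof is therefore a level-peeling lemma for the tree-walking machines,
\[ \mathrm{TWIPT}_{k} = \MSOTS \circ \mathrm{TWIPT}_{k-1}, \]
the tree-walking analogue of the decomposition $\MTT^{k} = \MSOTS\circ\MTT^{k-1}$ that (via the identities above) underlies Engelfriet--Vogler's result. For the inclusion $\supseteq$ one simulates an $\MSOTS$ post-processing of a level-$(k-1)$ machine inside a single extra invisible stack level: since an $\MSOTS$ is a TWT with MSO tests, it can be run over the intermediate tree as it is generated, recording its current tree-walking position in the freshly added top stack --- this is the iterated form of the easier inclusion $\MSOTS^{2}\subseteq\mathrm{IPTT}$. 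For the inclusion $\subseteq$ one peels the top stack level off as a separate $\MSOTS$ stage, exactly as Theorem~53 of~\cite{InvisiblePebbles} factors the IPTT into $\MSOTS^{2}$. Feeding the base case $\mathrm{TWIPT}_{0} = \MSOTS$ into this induction then gives $\mathrm{TWIPT}_{k} = \MSOTS^{k+1}$.

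The main obstacle is the $\subseteq$ direction of the level-peeling lemma. Its $k=1$ instance is already the technically heavy half of~\cite{InvisiblePebbles}, and for nested stacks one must show that tree-walking moves made at deep stack levels create no power beyond what $\MSOTS$-style sharing and unfolding can recover --- that invisible pebbles can be unfolded one level at a time, uniformly in $k$, so that peeling the top level leaks no information between levels. Managing the invisibility and non-circularity bookkeeping across all levels at once, rather than for a single extra pebble, is where I expect the genuine work to lie. A possibly cleaner alternative would be to transfer Engelfriet--Vogler's top-down level-decomposition verbatim and then argue that relaxing the head from top-down to tree-walking at each level costs precisely the passage from $\text{TDTT}$ to $\MSOTS$; but verifying that this relaxation adds \emph{exactly} one $\MSOTS$ per level, and never more, demands the same careful accounting.
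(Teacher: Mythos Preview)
Your proposal and the paper's own justification are both, by design, sketches rather than proofs --- the paper labels its argument an ``Idea'' and explicitly concedes that the details have not been checked. That said, the two sketches take genuinely different routes.

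The paper's idea is to invoke the general \emph{storage type} machinery of Engelfriet and Vogler: Theorem~5.16 of~\cite{EngelfrietPushdownMacro} says, roughly, that wrapping an arbitrary storage type $S$ in one more pushdown layer $P(S)$ corresponds, on the transducer side, to composing with one more factor from a fixed class. Instantiating $S$ with the tree-walking storage (so that the base level is $\text{RT(Tree-walk)} = \text{TWT}$, cf.\ \Cref{rem:pebbles}) and iterating $P(\cdot)$ should then identify the nested-stack tree-walking machine with a $k$-fold composition applied to a TWT with MSO tests; Theorem~18 of~\cite{EngelfrietIM21} is what collapses that expression to $\MSOTS^{k+1}$. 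Here the inductive step is an off-the-shelf general theorem, and the work to be done is checking that the tree-walking storage satisfies the hypotheses of that framework.

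Your route instead proves a bespoke level-peeling lemma $\mathrm{TWIPT}_{k} = \MSOTS \circ \mathrm{TWIPT}_{k-1}$ by hand, generalising the $k=1$ argument of~\cite{InvisiblePebbles} to arbitrary nesting depth. This is more self-contained --- no need to digest the storage-type formalism --- but, as you correctly diagnose, the $\subseteq$ direction replays at every level the technically heavy half of the IPTT decomposition, and the uniformity-in-$k$ bookkeeping is where the effort accumulates. The paper's approach trades that repeated technical effort for a one-time investment in the storage-type abstraction; yours trades abstraction for concreteness. Neither sketch, as written, actually discharges the obligation, and you are right that the hard inclusion is the crux either way.
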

\begin{claimproof}[Idea]
  I expect this to be derivable from the
  aforementioned~\cite[Theorem~18]{EngelfrietIM21}, plus the general
  result~\cite[Theorem~5.16]{EngelfrietPushdownMacro} about adding a pushdown
  layer on top of a \enquote{storage type} (see
  also~\cite{engelfriet2014contextfree}), but I'm too lazy to spell out and
  check the details.
\end{claimproof}

The class $\text{MTT}^{k}$ can also be characterised by the \enquote{high level
  tree transducer}~\cite{EngelfrietHighLevel}, which stores some kind of
higher-order functions: the tree contexts in MTTs are first-order i.e.\
tree-to-tree functions, second-order functions map first-order functions to
trees, etc.

\subparagraph{Connections with higher-order recursion schemes.}

Engelfriet and Vogler's high level tree transducer is directly inspired by
Damm's high level grammars~\cite{DammIO}. As explained for instance
in~\cite{BlumOng,ParysHomogeneity}, high level grammars are very close
syntactically --- and equivalently expressive --- to \emph{safe} higher-order
grammars. This notion of safety was introduced by Knapik, Niwiński and
Urzyczyn~\cite{SafeHORS} in the context of \enquote{higher-order recursion
  schemes}; a recursion scheme is a grammar-like specification of a single
infinite tree. Safe recursion schemes of order $k$ describe exactly the same
infinite trees as iterated (a.k.a.\ higher-order) pushdown tree-generating
automata of nesting depth $k$~\cite[Theorems~5.1~and~5.3]{SafeHORS} --- a result
that is strikingly similar to the equivalence between $\text{MTT}^{k}$ and
$k$-iterated pushdown transducers.

Moreover, the same class of infinite trees (for a fixed $k\geqslant0$) can be
equivalently described as those obtained by applying a function in
$(\unfold \circ \text{tree-to-graph MSO transduction})^{k+1}$ to a finite tree
--- this is the \enquote{Caucal hierarchy}~\cite{Caucal02,CarayolW03}. Here the
output of the MSO transductions are rooted directed graphs that are not
necessarily acyclic, and the unfolding operation is generalized to generate
infinite trees when the input has cycles. Since an MSOTS is defined as
$\unfold \circ \text{tree-to-DAG MSOT}$, the Caucal hierarchy is clearly
analogous to the $\text{MSOTS}^{k+1}$ tree transducer hierarchy.

This means that the counterpart to the strict inclusion
$\text{MTT}^{k} \subset \text{MSOTS}^{k+1}$ in the setting of recursion schemes
is an equality. \Cref{clm:iptt-gen} suggests a heuristic explanation: since
iterated pushdown tree-generating automata do not have an input, the difference
between top-down pointers and tree-walking pointers to the input disappears.

\subsection{The MTT composition hierarchy for tree-to-string functions}%
\label{sec:mtt-comp-string}

For a class of tree-to-tree functions $\cC$, let $\SO(\cC)$ be the subclass of functions that output strings encoded as unary trees. Recall from \Cref{sec:string-output} the definition of the yield operation and of the top-down tree-to-string transducer. In that section we explained that
\[ \SO(\MTT^1) = \text{top-down tree-to-string transducer} \qquad\qquad(\text{\Cref{thm:macro-top-down}})\]
For the next levels of the $\text{MTT}^{k}$ hierarchy, we also have:
\begin{proposition}[{rephrasing of~\cite[Theorem~8.7]{EngelfrietHighLevel}}]
  $\forall k \geqslant 1,\; \SO(\MTT^{k+1}) = \yield \circ \MTT^{k}$.
\end{proposition}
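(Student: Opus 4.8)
The plan is to prove the two inclusions separately. For $\supseteq$, I would first observe that $\yield$ itself, viewed as a function into strings encoded as unary trees, is computed by a single deterministic macro tree transducer: take one state $q$ of arity $1$ with the rule $q\lrangle{a(t_1,\dots,t_m)}(x) \to q\lrangle{t_1}(\cdots q\lrangle{t_m}(x)\cdots)$ at every internal node of rank $m$, the rule $q\lrangle{a}(x) \to a(x)$ at every non-neutral leaf $a$, and $q\lrangle{a}(x) \to x$ at every neutral leaf, with the computation started as $q\lrangle{t}(\varepsilon)$. This MTT accumulates the leaf word in the spine of a unary tree exactly as $\yield$ prescribes, concatenation being realised by parameter substitution, so $\yield$ in the unary-tree encoding lies in $\MTT^{1}$. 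Hence $\yield \circ \MTT^{k} \subseteq \MTT^{1} \circ \MTT^{k} = \MTT^{k+1}$, and since the output is unary this lands in $\SO(\MTT^{k+1})$.

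For the harder inclusion $\subseteq$, take $f \in \SO(\MTT^{k+1})$ and write $f = M \circ g$ with $M$ a single MTT and $g = g_{k} \circ \dots \circ g_{1} \in \MTT^{k}$. The idea is to trade the last MTT for a top-down transducer that gets absorbed by the yield. Since $f$ has unary output, $M$ produces unary trees on the range of $g$; applying the string-output characterisation (\Cref{thm:macro-top-down}) together with the identity "top-down tree-to-string transducer $= \yield \circ$ top-down tree transducer" from \Cref{sec:string-output}, I would obtain a top-down tree transducer with regular lookahead $T$ such that $M = \yield \circ T$ holds on the relevant inputs. This gives $f = \yield \circ T \circ g$.

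It then remains to absorb $T$ without increasing the composition length, i.e.\ to show $T \circ g \in \MTT^{k}$. This rests on the closure property that top-down transducers with lookahead compose on the left into MTTs, which I would derive from the two composition facts recalled in the proof of the interleaving theorem above: using $\MTT = \MSOTS \circ \text{TOP}$ and $\text{TOP} \circ \MSOTS = \MSOTS$ (the latter holding even for top-down transducers with MSO tests, hence with regular lookahead), one computes $T \circ \MTT = T \circ \MSOTS \circ \text{TOP} = \MSOTS \circ \text{TOP} = \MTT$. Applying this to the outermost factor, $T \circ g_{k} \in \MTT$, so that $T \circ g = (T\circ g_{k})\circ g_{k-1}\circ\dots\circ g_{1} \in \MTT^{k}$ and therefore $f = \yield \circ (T \circ g) \in \yield \circ \MTT^{k}$.

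The step I expect to be the main obstacle is the application of \Cref{thm:macro-top-down} to $M$: that theorem is stated for MTTs whose output is unary on all inputs, whereas here $M$ is only guaranteed to produce strings on $\text{range}(g)$, which need not be a regular tree language, so one cannot simply restrict the domain of $M$ by lookahead. I would handle this by relativising the characterisation --- inspecting the construction behind it, which represents the context-valued registers of $M$ as strings precisely when the contexts produced are unary-shaped, an invariant that holds along every computation of $M$ on an input in $\text{range}(g)$ --- so that $\yield \circ T$ agrees with $M$ on $\text{range}(g)$, which is all that is needed since $T$ is only ever evaluated after $g$.
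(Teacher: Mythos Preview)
Your approach is essentially the paper's: factor the outermost MTT as $\yield \circ T$ via \Cref{thm:macro-top-down}, then absorb the top-down transducer $T$ into the next MTT. Two remarks on the execution.

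First, your detour through $\MSOTS$ to establish $T \circ \MTT = \MTT$ works, but the paper simply cites this closure directly as~\cite[Lemma~5]{OutputMacro}; there is no need to decompose via $\MSOTS$.

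Second, the domain issue you flag is real --- the paper's first line $\SO(\MTT^{k+1}) = \SO(\MTT^{1}) \circ \MTT^{k}$ does sweep it under the rug --- but your proposed fix (relativising the proof of \Cref{thm:macro-top-down} to $\mathrm{range}(g)$) is heavier than necessary and, as stated, only a sketch. A cleaner resolution: let $\pi$ be any top-down tree transducer that is the identity on unary trees and sends every tree to a unary tree (e.g.\ \enquote{follow the leftmost child}). Since $f$ already outputs unary trees, $f = \pi \circ f = \pi \circ M \circ g$. Now $\pi \circ M \in \MTT$ by the same closure fact $\text{TOP} \circ \MTT = \MTT$, and $\pi \circ M$ genuinely lies in $\SO(\MTT^{1})$ on \emph{all} inputs, so \Cref{thm:macro-top-down} applies without caveat. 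This is presumably the normalisation the paper has in mind when it writes the first equality without comment.
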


The case $k=1$ tells us that $\SO(\MTT^{2}) =$ \enquote{macro tree-to-string transducer} (cf.~\S\ref{sec:macro-string}).
\begin{proof}
  This is established in~\cite{EngelfrietHighLevel} using \enquote{high level
    tree transducers}, but it can also be shown by working directly on
  compositions of MTTs:
  \begin{align*}
    \SO(\MTT^{k+1}) &= \SO(\MTT^1)\circ\MTT^1\circ\MTT^{k-1}\\
    &= \yield \circ \underbrace{(\text{top-down tree transducer}) \circ \MTT}_{=\,\MTT,\ \text{cf.~\cite[Lemma~5]{OutputMacro}}} \circ\; \MTT^{k-1}
  \end{align*}
  (beware: again, in~\cite{OutputMacro}, the notation $\circ$ is flipped compared to ours).
\end{proof}

Thanks to this, for instance, Engelfriet and Maneth's \enquote{bridge
  theorem}~\cite[Theorem~18]{OutputMacro} on output languages, originally stated
using $\yield$, can now be seen as relating $\SO(\MTT^{k})$ and
$\SO(\MTT^{k+1})$ for each $k\geqslant1$. This rephrasing has proved useful in
the recent work~\cite{Sandra}.

\bibliography{bib}

\end{document}